\theoremstyle{plain}
\newtheorem{theorem}{Theorem}[section]
\newtheorem{claim}[theorem]{Claim}
\newtheorem{conjecture}[theorem]{Conjecture}
\tikzset{vertex/.style={minimum size=2mm,circle,fill=black,draw,inner sep=0pt},
         decoration={markings,mark=at position .5 with {\arrow[black,thick]{stealth}}}}
\theoremstyle{definition}
\newtheorem{definition}[theorem]{Definition}
\newcommand{\alg}{\ensuremath{\textsc{Alg}_\beta}\xspace}
\newtheorem{myrule}{Rule}
\newcommand{\TSP}{\ensuremath{\beta\mbox{-TSP}}\xspace}
\newcommand{\Sodd}{\ensuremath{S^{\mathrm{odd}}}}
\newcommand{\Seven}{\ensuremath{S^{\mathrm{even}}}}
\newcommand{\bi}[2]{\ensuremath{\{#1^-,#2^-\}}\xspace}
\newcommand{\indeg}[1]{\ensuremath{\mathrm{deg}^-(#1)}\xspace}
\newcommand{\outdeg}[1]{\ensuremath{\mathrm{deg}^+(#1)}\xspace}
\newcommand{\MG}{\ensuremath{D}\xspace}
\newcommand{\ie}{i.\,e.\xspace}
\title{An Improved Approximation Algorithm for the Traveling
    Salesman Problem with\\ Relaxed Triangle Inequality\thanks{This work was partially supported by ERC advanced investigator grant
    226203 and Deutsche Forschungsgemeinschaft grant BL511/10-1}}
\author{Tobias M\"{o}mke\\
        Department of Computer Science\\ 
        Saarland University\\ 
        {\tt moemke@cs.uni-saarland.de}
        }
\begin{document}
\maketitle

\begin{abstract}
    Given a complete edge-weighted graph $G$, we present a polynomial time algorithm 
    to compute a degree-four-bounded spanning Eulerian subgraph of $2G$ that has at
    most $1.5$ times the weight of an optimal TSP solution of $G$.
    Based on this algorithm and a novel use of orientations in graphs, we obtain
    a $(3\beta/4 + 3\beta^2/4)$-approximation algorithm for TSP with $\beta$-relaxed triangle inequality ($\beta$-TSP), where $\beta \ge 1$.
    A graph $G$ is an instance of $\beta$-TSP, if it is a complete graph with edge weights $c\colon E(G) \rightarrow \mathbb{Q}_{\ge 0}$ that are restricted as follows.
    For each triple of vertices $u,v,w \in V(G)$, $c(\{u,v\}) \le \beta (c(\{u,w\}) + c(\{w,v\}))$.
\end{abstract}

\section{Introduction}\label{sec:intro}

In the traveling salesman problem we are given a complete edge-weighted graph and we
have to find a minimum-weight Hamiltonian tour, \ie, a tour that visits each vertex exactly once. 
This classical problem has been studied extensively
and in many variations. Most of the variations concern restrictions of the weight
function. One of the most natural restrictions is to assume that the weight
function is a metric. Intuitively this means that we are allowed to
take the shortest path to the next vertex to visit, even if this means to visit
some of the vertices more than once. 

Despite intensive research for more than 30 years, Christofides' algorithm
is still the best known approximation algorithm for the metric
traveling salesman problem and its approximation ratio is 1.5~\cite{chr76}. For similar
settings, however, a recent fast development has started. For graphic
metrics---metrics obtained by taking the lengths of the shortest paths in an
unweighted graph as weights---a sequence of improvements was published within a
short time frame. 

The development started with papers concerning graphs with degree restrictions
and connectivity requirements \cite{GLS05,BSvdSS12}. For the general graph-TSP,
the first improvements over Christofides' algorithm were due to Oveis Gharan,
Saberi and Singh \cite{GSS11a} and M\"{o}mke and Svensson \cite{MS11a}. The
analysis of M\"{o}mke and Svensson was subsequently refined by
Mucha \cite{Muc12}. The currently best approximation algorithm achieves an
approximation ratio of $1.4$ and is due to Seb\H{o} and Vygen \cite{SV14},
combining results from \cite{MS11a} with a smart technique based on ear
decompositions.

Instead of restricting the metric, in this paper we consider a
relaxation. We use a relaxation parameter $\beta \ge 1$ and require that the
input instance is a complete graph $G$ where the non-negative weight function 
$c\colon E(G) \rightarrow \mathbb{Q}_{\ge 0}$ satisfies the relaxed triangle 
inequality $c(\{u,w\}) \le \beta (c(\{u,v\}) + c(\{v,w\}))$ for any three
vertices $u,v,w \in V(G)$.

For approximation algorithms, the relaxed triangle inequality was introduced by 
Bandelt, Crama, and Spieksma \cite{BCS94}. This type of parameterization also
provides a suitable type of relaxation in the context of stability of
approximation~\cite{BHKSU02} and our result fits into this framework.

The $\beta$-relaxed version of metric TSP (\TSP) was first considered by 
Andreae and Bandelt \cite{AB95} who presented a $1.5 \beta^2 + 0.5 \beta$ approximation algorithm.
Subsequently Andreae improved the result to $\beta + \beta^2$ \cite{And01}. The next
development was due to B\"{o}ckenhauer et al.~\cite{BHKSU02}. They obtained a $1.5
\beta^2$-approximation algorithm, which is better than the previous algorithms
for $1 < \beta <  2$. Bender and Chekuri \cite{BC00} independently obtained a $4\beta$ approximation algorithm, 
which is better than the algorithm of Anreae for $\beta > 3$.

\subsection{Results and Overview of Techniques}

We provide an improved approximation algorithm for \TSP.
\begin{theorem}\label{thm:main}
    There is a polynomial time $(3\beta/4 + 3\beta^2/4)$-approximation algorithm
    for \TSP.
\end{theorem}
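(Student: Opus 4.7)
The plan is to combine the degree-four-bounded Eulerian subgraph of $2G$ from the algorithm in the abstract with an orientation-based shortcutting scheme. First I would apply that algorithm to obtain an Eulerian multigraph $H \subseteq 2G$ with $c(H) \le \tfrac{3}{2}\cdot OPT$ (where $OPT$ denotes the cost of an optimal Hamiltonian tour of $G$) and every vertex of $H$ having degree $2$ or $4$. Any Euler tour of $H$ visits each degree-$4$ vertex twice and each degree-$2$ vertex once, so to obtain a Hamiltonian tour of $G$ we must bypass one visit at each degree-$4$ vertex.

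Next I would choose an orientation of $H$ and use it to select, for each degree-$4$ vertex, which visit is bypassed. Since $H$ is Eulerian it admits an Eulerian orientation (in-degree equal to out-degree everywhere); at a degree-$4$ vertex this gives two in-edges and two out-edges, pairable into visits in two ways. The orientation together with a chosen pairing tells us which in-edge/out-edge pair is bypassed and which is kept; the bypass removes the two chosen edges of $H$ and replaces them by the direct edge joining the tail of the removed in-edge to the head of the removed out-edge.

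For the cost analysis I would charge each edge of $H$ either by $\beta$ or by $\beta^2$. An edge that survives, or that is removed as part of a bypass whose replacement shortcut uses only one application of the relaxed triangle inequality, receives the factor $\beta$; an edge that is an \emph{interior} edge of a chain of two consecutive bypasses, so that the replacement shortcut is bounded by two nested applications of the inequality, receives the factor $\beta^2$. The aim of the orientation step is to show that the orientation can be chosen such that (i) bypass chains never exceed length two along the Euler tour and (ii) the total weight of edges in the $\beta^2$-role is at most $\tfrac{1}{2}c(H)$. Given (i) and (ii),
\[
c(\text{tour}) \le \tfrac{\beta}{2}c(H) + \tfrac{\beta^2}{2}c(H) = \tfrac{\beta + \beta^2}{2}\cdot c(H) \le \tfrac{3\beta + 3\beta^2}{4}\cdot OPT,
\]
which is exactly the desired approximation ratio.

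The main obstacle is establishing (i) and (ii): exhibiting an orientation of $H$ (together with a consistent pairing at each degree-$4$ vertex) that both forces short bypass chains and balances ``interior'' weight against ``boundary'' weight to the $1/2$-threshold. This is the ``novel use of orientations'' advertised in the abstract; I would expect the construction to come from an auxiliary flow or $b$-matching problem whose feasible solutions correspond to orientations with the desired local parity and balance properties, combined with an averaging or exchange argument to pick one satisfying both (i) and (ii) simultaneously. The whole pipeline is polynomial because each ingredient (the Eulerian-subgraph subroutine, constructing an Eulerian orientation, and solving the auxiliary combinatorial problem) is polynomial.
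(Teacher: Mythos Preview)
Your framework is right and matches the paper: obtain the degree-4-bounded Eulerian $H$ with $c(H)\le \tfrac32\,\mathrm{opt}$, shortcut at degree-4 vertices using an orientation, and split the cost into a $\beta$-part and a $\beta^2$-part, each of weight at most $c(H)/2$. The final chain of inequalities you wrote is exactly the one the paper uses.

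The gap is precisely where you flag it: you do not have a construction that realizes (i) and (ii), and your guess (an auxiliary flow/$b$-matching plus averaging) is not how it is done. The paper's mechanism is structural, not LP-based. After the Eulerian orientation, a single rule (replace the two out-arcs of a degree-4 vertex by one bi-directed arc) is applied to \emph{every} degree-4 vertex, which collapses $H$ into a disjoint union of cycles. These cycles are then reassembled into a \emph{cactus} $K$ by undoing the rule one vertex at a time, which equips each block of $K$ (except a root block) with a unique \emph{exit point}. The cactus is then \emph{re}-oriented: along each non-root block one chooses, at its exit point, the cheaper of the two incident edges to be the first outgoing edge. Finally the same rule is applied once more at every remaining degree-4 vertex to produce the Hamiltonian tour.

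This two-pass structure is what delivers (i) and (ii). For (i): when the second pass merges two edges $f',f''$ at a degree-4 vertex $v$, the cactus hierarchy forces $v$ to be the exit point of exactly one of the two incident blocks, and the edge on the \emph{other} block's side is necessarily an original edge of $H$ (it had $v$ as a tail in the first orientation, hence was never created by the rule). Thus every final shortcut corresponds to a path of length at most three in $H$---your ``bypass chains never exceed length two'' is obtained not by forbidding long chains globally but by the tail/head asymmetry of the rule. For (ii): the set $E$ of ``first outgoing'' edges (the only ones that can incur the $\beta^2$ factor) is, by construction, in bijection with its complement $E'$ at each exit point, and pointwise $c(e)\le c(e')$. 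This local pairing gives $c(M)\le c(M')$ immediately; no averaging or exchange argument is needed. Without this cactus-plus-reorientation idea, neither (i) nor (ii) follows from a generic Eulerian orientation, so your sketch as written does not yet constitute a proof.
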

The approximation ratio of our algorithm outperforms the ratios provided by Andreae \cite{And01} and 
by B\"{o}ckenhauer et al.~\cite{BHKSU02} for all values of $\beta$. 
The $4\beta$ approximation algorithm of Bender and Chekuri \cite{BC00} still is better
than our algorithm for $\beta > 13/3 \approx 4.33$. To obtain our result we first use the matroid version by Kir\'aly et 
al.~\cite{KLS12_degree} of a bounded degree spanning tree result of Singh and Lau \cite{SL07}, 
combined with special $b$-matchings that respect parities of vertex degrees. 
We obtain a degree-$4$-bounded spanning Eulerian subgraph in $2G$ for any complete graph $G$ with edge weights 
$c\colon E(G) \rightarrow \mathbb{Q}$ 
such that the weight of the computed graph is at most 1.5 times the weight of an optimal TSP solution.
Finally, we introduce an orientation technique that provides a cactus graph with useful properties 
such that the weight for shortcuts within the graph is restricted when constructing the TSP solution.
One key insight is that we obtain two disjoint sets of edges such that we have to consider the factor $\beta^2$ only for the smaller of the two sets.

\section{Preliminaries}\label{sec:prelim}

All graphs in this paper are allowed to have multiple edges.
For convenience of notation, however, we do not distinguish between multi-sets and sets.
We handle multiple edges of a graph as separate edges. 
This way there may be edge-disjoint cycles of length two.
(We define a cycle to be a simple cycle, \ie, vertices may not be visited twice.)

Given a graph $G$, $V(G)$ and $E(G)$ are its set of vertices and its set of edges. 
For a set of edges $F \subseteq E(G)$, we write $c(F)$ as shorthand for $\sum_{e \in F} c(e)$.
Similarly, for a graph $G$ we write $c(G)$ as shorthand for $c(E(G))$.
A \emph{block} of a graph $G$ is a maximal two-vertex-connected subgraph.

A $b$-matching of a graph $G$ is a subgraph $G'$ of $G$ with possible additional multiplicities of edges
where each vertex has a degree of at most $b$. 
Equivalently, we identify a $b$-matching with its characteristic vector $\mathbf{x}$ of edges, 
that is, for each edge $e \in E(G)$, $\mathbf{x}$ has an entry $x_e \in \mathbb{Z}$ and $x_e \ge 1$ if and only if $e \in E(G')$.

Let $G=(V,E)$ be an undirected graph and let $M$ be the $V \times E$ incidence matrix of $G$. 
Let $\mathbf{l} \le \mathbf{m}$ and $\mathbf{a} \le \mathbf{b}$ be integer vectors in $\mathbb{Z}^E$ resp.\ $\mathbb{Z}^V$ 
and let $\Sodd$ and $\Seven$ be disjoint subsets of $V$.

Consider the following constraints that impose restrictions on the characteristic vector of the $b$-matching $\mathbf{x} \in \mathbb{Z}^E$.

\begin{equation}\label{eq:bmatching}
    \begin{array}{cllccll}
        \mathrm{(i)}   & \mathbf{l} \le \mathbf{x} \le \mathbf{m}           &
        \quad&
        \mathrm{(iii)} & (M\mathbf{x})_v \mbox{ is odd}  &\quad \mbox{if } v \in
        \Sodd\\
        \mathrm{(ii)}  & \mathbf{a} \le M\mathbf{x} \le \mathbf{b}          &
        \qquad&
        \mathrm{(iv)}  & (M\mathbf{x})_v \mbox{ is even} &\quad \mbox{if } v \in
        \Seven
    \end{array}
\end{equation}
These constraints specify 
(i) bounds on the multiplicity of edges in $G$, 
(ii) bounds on the degrees of the vertices in $G$, and 
(iii,iv) the parities of degrees for specific vertices (since $M\mathbf{x}$ is the vector of vertex degrees).

We slightly abuse notation and, whenever the meaning is clear from the context,
we associate an integer $\mathbf{i}$ with the corresponding vector $(i,i,\dotsc,i)$.
Note that for an integer $b'$, \eqref{eq:bmatching} specifies a $b'$-matching if we set $\mathbf{l} = \mathbf{0}$,
$\mathbf{m} = \boldsymbol{\infty}$, $\mathbf{a}=\mathbf{0}$, $\mathbf{b} = \mathbf{b'}$, 
and $\Sodd=\Seven=\emptyset$. 
In the context of this paper we need specific $b$-matchings that we will specify by giving values to the parameters of \eqref{eq:bmatching}. 

The following theorem is Theorem~$36.5$ in Schrijver's book \cite{Sch03}.
(Here, we use a simplified setting. In the original theorem a more general class of graphs can be used.)
\begin{theorem}[Edmonds, Johnson \cite{EJ73}]\label{thm:bmatching}
    For any $\mathbf{c} \in \mathbbm{Q}^E$, an integer vector $\mathbf{x}$
    maximizing $\mathbf{c}^T\mathbf{x}$ over
    \eqref{eq:bmatching} can be found in strongly polynomial time (if it exists).
\end{theorem}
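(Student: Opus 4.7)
The plan is to invoke Edmonds' polyhedral description of the $b$-matching polytope and combine it with a combinatorial augmenting-path routine generalizing the blossom algorithm. Specifically, the convex hull of integer vectors $\mathbf{x}$ satisfying \eqref{eq:bmatching} can be described by the box constraints (i) and (ii) together with a family of blossom-type odd-set inequalities that enforce the parity requirements (iii) and (iv); for each odd subset $U$ of vertices with prescribed parity, the total number of edges crossing $U$ must have the correct parity, giving the familiar odd-cut constraints. Once such a description is available, maximizing $\mathbf{c}^T\mathbf{x}$ reduces to an LP whose separation problem is tractable, so either the ellipsoid method or a direct primal--dual algorithm succeeds.

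A more hands-on route is by reduction to ordinary maximum-weight perfect matching. First I would eliminate lower bounds by pre-selecting $\mathbf{l}$ on each edge and offsetting $\mathbf{a}, \mathbf{b}, \mathbf{m}$ accordingly. Next, each vertex $v$ with degree window $[a_v,b_v]$ would be replaced by a block of auxiliary vertices that absorb the slack, and each edge $e$ with cap $m_e-l_e$ by the appropriate number of parallel copies between blocks. Parity constraints for vertices in $\Sodd \cup \Seven$ would be enforced by small gadgets attached to $v$'s block, forcing the number of matching edges incident to $v$'s copies to have the required parity. A perfect matching in the resulting auxiliary graph then corresponds to an integer feasible $\mathbf{x}$, and weights transfer in the obvious way.

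The hard part will be achieving \emph{strongly} polynomial time. A naive reduction introduces a number of copies proportional to the entries of $\mathbf{m}$ and $\mathbf{b}$, which may be exponential in their bit length; running the blossom algorithm on such a blown-up graph is only weakly polynomial. To overcome this I would keep multiplicities symbolic and augment ``in bulk'' along edges, adapting the standard combinatorial $b$-matching algorithm so that each augmentation step manipulates entire blocks rather than individual copies, with shrinking of blossoms performed on the contracted graph over $V(G)$. Combined with Tardos-style scaling to remove the dependence on the bit length of $\mathbf{c}$, this yields the strongly polynomial bound asserted in the theorem, and the existence check for a feasible $\mathbf{x}$ falls out of the same machinery as the first phase of the algorithm.
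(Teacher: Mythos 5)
The paper does not prove this statement at all: it is quoted as Theorem~36.5 of Schrijver's book and attributed to Edmonds and Johnson, so there is no in-paper argument to compare yours against. What you are proposing is a from-scratch proof of a deep classical result. Judged on that basis, your sketch is a reasonable roadmap of how the result is actually established in the literature (a Tutte-type reduction to perfect matching, a blossom-type polyhedral description, and an Anstee-style strongly polynomial implementation), but both of your routes have genuine gaps precisely where the real work lies.

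In the polyhedral route, the description you give of the convex hull is wrong as stated: ``the total number of edges crossing $U$ must have the correct parity'' is not a linear inequality, and the parity conditions (iii)--(iv) are not captured by odd-cut constraints on $\delta(U)$ alone. The correct blossom inequalities for the general matching problem are indexed by pairs $(U,F)$ with $U\subseteq V$ and $F\subseteq\delta(U)$ and take the form $x(E[U])+x(F)\le\lfloor(b(U)+m(F))/2\rfloor$ together with the analogous lower-bound versions; identifying these, proving they suffice, and supplying a separation oracle is the bulk of the theorem, not a footnote to it. In the reduction route, the gadget that simultaneously enforces a parity constraint and a degree window $[a_v,b_v]$ is exactly the nontrivial step (the standard construction absorbs slack in increments of two, via loops or paired pendant edges, so that parity is preserved), and you only assert that such a gadget exists. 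Most seriously, your treatment of strong polynomiality is a placeholder: ``augmenting in bulk'' does not by itself bound the number of augmentations independently of the magnitudes in $\mathbf{m}$ and $\mathbf{b}$, and the known argument first solves a fractional relaxation by a strongly polynomial min-cost-flow computation, rounds it, and then shows that only $O(|V|)$ further augmentations are required; invoking ``Tardos-style scaling'' does not substitute for that analysis. Note finally that the paper only ever instantiates the theorem with $\mathbf{l}=\mathbf{0}$, $\mathbf{m}=\mathbf{1}$, $\mathbf{a}=\mathbf{0}$, $\mathbf{b}=\mathbf{2}$, so citing the general result, as the author does, is the appropriate move; if you want a self-contained argument, you should restrict to that special case, where the multiplicity blow-up you worry about disappears and an ordinary weighted-matching reduction of polynomial size suffices.
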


Given a graph $G$, a $1$-tree is a subgraph of $G$ composed of a spanning tree on $V(G) \setminus \{v_1\}$ for some $v_1 \in V(G)$ and two edges incident to $v_1$.
If $G$ is a complete graph, then for each choice of $v_1$ there is a matroid $M$ such that the $1$-trees of $G$ are the bases of $M$~\cite{HK70}.
We are interested in $1$-trees where additionally the vertices have degree restrictions.

For a vector $\mathbf{b}$ of vertex degrees, a $1$-tree $T$ of $G$ is degree $\mathbf{b}$ bounded 
if the degree of each vertex $v \in V(T)$ is at most $\mathbf{b}_v$.
The following theorem follows directly from Kir\'aly et al.~\cite{KLS12_degree} who showed a more general result for matroids.
(We run their algorithm for each choice of $v_1$.)

\begin{theorem}\label{thm:tree}
Given a complete graph $G$ with edge weights $c\colon E(G) \rightarrow \mathbb{Q}$ and a vector $\mathbf{b}$ of upper bounds on the vertex degrees, 
there is polynomial time algorithm that computes a $1$-tree $T$ in $G$ such that 
(a) $T$ is degree $(\mathbf{b}+\mathbf{1})$ bounded and
(b) $c(T)\le c(T')$ for all degree $\mathbf{b}$ bounded $1$-trees $T'$ in $G$.
\end{theorem}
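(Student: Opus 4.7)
The plan is to reduce this result to the degree-bounded matroid basis theorem of Kir\'aly, Lau, and Singh~\cite{KLS12_degree}, as the parenthetical remark already hints. The first ingredient is the classical fact attributed to Held and Karp~\cite{HK70}: for every fixed choice of root $v_1 \in V(G)$, the 1-trees of $G$ with distinguished vertex $v_1$ are precisely the bases of a matroid $M_{v_1}$ on the ground set $E(G)$. Concretely, $M_{v_1}$ can be described as the matroid union of the graphic matroid of $G - v_1$ with the uniform matroid of rank $2$ on the edges incident to $v_1$, so the number of edges in any basis equals $|V(G)|$ and the edges incident to $v_1$ contribute exactly two to every basis.

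For each of the $n$ possible choices of $v_1$ I would invoke the algorithm of~\cite{KLS12_degree} on $M_{v_1}$ with the edge weights $c$ and, for every vertex $v \in V(G)$, the upper bound $\mathbf{b}_v$ on the number of selected edges incident to $v$. That algorithm returns a basis of the matroid --- equivalently, a 1-tree rooted at $v_1$ --- whose cost is at most the cost of any $\mathbf{b}$-bounded basis of $M_{v_1}$, and whose degree at every vertex exceeds $\mathbf{b}_v$ by at most one; this is exactly the $(\mathbf{b}+\mathbf{1})$ slack appearing in property~(a). The final output is the cheapest of the resulting $n$ 1-trees.

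For property~(b), observe that the optimal $\mathbf{b}$-bounded 1-tree $T^{\ast}$ has some distinguished vertex $v_1^{\ast}$, and in the iteration $v_1 = v_1^{\ast}$ the matroid algorithm competes against $T^{\ast}$ itself --- a feasible $\mathbf{b}$-bounded basis of $M_{v_1^{\ast}}$ --- so it produces a 1-tree of cost at most $c(T^{\ast})$, which the overall output inherits. The one mild subtlety is that the degree of $v_1$ in any 1-tree rooted at $v_1$ is exactly $2$, so the guarantee is only nontrivial when $\mathbf{b}_{v_1} \ge 1$; if $\mathbf{b}_v = 0$ for some $v$ there is no degree-$\mathbf{b}$ bounded 1-tree at all and the statement is vacuous. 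Since the algorithm of~\cite{KLS12_degree} is polynomial and is invoked only $n$ times, the overall running time stays polynomial, so there is no significant obstacle: the only real work is to verify that the matroid-union formulation of 1-trees with one upper bound per vertex-incidence class fits directly into the hypotheses of~\cite{KLS12_degree}.
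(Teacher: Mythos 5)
Your proposal is correct and follows exactly the route the paper intends: the paper itself offers no proof beyond the remark that the theorem ``follows directly from Kir\'aly et al.\ (we run their algorithm for each choice of $v_1$),'' and your write-up is a faithful and accurate expansion of that remark --- Held--Karp's matroid structure of $1$-trees for each fixed $v_1$, the degree-bounded matroid basis algorithm with additive violation $\Delta-1=1$ since each edge lies in two vertex-incidence classes, and taking the cheapest of the $n$ outputs, with the comparison against the optimal $\mathbf{b}$-bounded $1$-tree in the iteration matching its distinguished vertex. No gaps.
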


For our algorithm we need cactus graphs. 
Here we use the following definition of cacti which strictly speaking specifies the subclass of 2-edge-connected cacti.
\begin{definition}[Cactus]\label{def:treecycle}
    A graph $G$ is a \emph{cactus} if it is 2-edge connected and 
    no two cycles share an edge.
\end{definition}
In particular, the blocks of a cactus are cycles.

Given a graph $G$, $2G$ is the graph where each edge of $G$ is doubled.
A graph $G'$ is a \emph{spanning Eulerian subgraph of $2G$} if
$V(G) = V(G')$, $G'$ is a connected subgraph of $2G$, and each vertex of $G'$
has even degree.

\section{The Algorithm $\boldmath{\textsc{Alg}_\beta}$}\label{sec:tsp}

Before we proceed to the main result, we show an
intermediate observation. With the preparation in the preliminaries, the following theorem 
is not hard. It is, however, of independent interest given the importance of Eulerian
graphs. Note that the weight function is neither required to be metric nor to be non-negative.

\begin{theorem}\label{thm:eul}
    Let $G$ be a complete graph with edge weights $c\colon E(G) \rightarrow \mathbb{Q}$. 
    There is a polynomial time algorithm that computes a degree-$4$-bounded spanning Eulerian subgraph of $2G$ such that 
    its total weight is at most 1.5 times the weight of an optimal TSP solution in $G$.
\end{theorem}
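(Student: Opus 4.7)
The plan is to run a Christofides-style algorithm in which the spanning tree is replaced by a degree-bounded $1$-tree (from Theorem~\ref{thm:tree}) and the odd-vertex perfect matching is replaced by a $b$-matching (from Theorem~\ref{thm:bmatching}) whose constraints simultaneously enforce the correct parity at every vertex, keep the total edge multiplicity within that allowed by $2G$, and cap the combined degree by $4$.

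First, fix any vertex $v_1$ and invoke Theorem~\ref{thm:tree} with degree vector $\mathbf{2}$. An optimal Hamiltonian tour, split at $v_1$, is itself a degree-$2$ bounded $1$-tree, so the minimum degree-$2$ bounded $1$-tree has weight at most $\mathrm{OPT}$ (the weight of an optimal tour). Theorem~\ref{thm:tree} therefore returns in polynomial time a $1$-tree $T$ with $c(T)\le\mathrm{OPT}$ whose vertex degrees all lie in $\{1,2,3\}$; note that $T$ is automatically connected and spanning.

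Next, set up constraints \eqref{eq:bmatching} with $\mathbf{l}=\mathbf{a}=\mathbf{0}$, per-edge upper bound $m_e=1$ if $e\in E(T)$ and $m_e=2$ otherwise (so the $b$-matching together with $T$ stays inside $2G$), per-vertex degree upper bound $(\mathbf{b})_v:=4-\mathrm{deg}_T(v)$, and $\Sodd:=\{v:\mathrm{deg}_T(v)\text{ odd}\}$, $\Seven:=V(G)\setminus\Sodd$. Minimizing $c(\mathbf{x})$ over these constraints is strongly polynomial by Theorem~\ref{thm:bmatching}. To bound its value, take an optimal Hamiltonian cycle $H$ and pick a subset $J\subseteq E(H)$ by walking around $H$ and propagating the rule that each $v\in\Sodd$ keeps exactly one of its two incident $H$-edges and each $v\in\Seven$ keeps zero or two. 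This closes up consistently because $|\Sodd|$ is even, producing two complementary subsets of $E(H)$ of combined weight $\mathrm{OPT}$; hence the cheaper one, $J$, has $c(J)\le\mathrm{OPT}/2$. Moreover $J$ uses each edge at most once (so it respects every $m_e$), and each vertex has $J$-degree $1$ if $v\in\Sodd$ and $0$ or $2$ if $v\in\Seven$, which is always at most $4-\mathrm{deg}_T(v)$. Thus $J$ is feasible, so the $b$-matching optimum is at most $\mathrm{OPT}/2$.

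Adding the computed $b$-matching to $T$ yields a connected spanning subgraph of $2G$ whose vertex degrees are all even and at most $4$, \ie, a degree-$4$-bounded spanning Eulerian subgraph of $2G$, of total weight at most $c(T)+\mathrm{OPT}/2\le\tfrac{3}{2}\mathrm{OPT}$. The one step requiring genuine argument is the feasibility witness for the $b$-matching: one has to verify that the ``cheaper half'' of the optimal Hamiltonian cycle simultaneously satisfies the edge-multiplicity bounds, the degree bounds, and the parity prescription. Everything else is a direct application of Theorems~\ref{thm:tree} and~\ref{thm:bmatching}.
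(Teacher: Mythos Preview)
Your proof is correct and follows essentially the same approach as the paper: compute a degree-$3$-bounded $1$-tree via Theorem~\ref{thm:tree} with $\mathbf{b}=\mathbf{2}$, then add a parity-constrained $b$-matching via Theorem~\ref{thm:bmatching}, using the two complementary ``halves'' of an optimal Hamiltonian cycle as feasibility witnesses. The only difference is cosmetic: the paper takes uniform parameters $\mathbf{m}=\mathbf{1}$ and $\mathbf{b}=\mathbf{2}$, whereas you use $m_e=1$ for tree edges, $m_e=2$ otherwise, and $b_v=4-\deg_T(v)$; both choices yield a degree-$4$-bounded Eulerian subgraph of $2G$ and admit the same Hamiltonian-cycle halves as feasible solutions, so the weight bound is identical.
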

\begin{proof}
    We combine Theorem~\ref{thm:bmatching} with Theorem~\ref{thm:tree}. 
    To obtain a degree-$4$-bounded Eulerian subgraph we compute a
    degree-$3$-bounded $1$-tree $\tau$ and add a $b$-matching with
    the following parameters of \eqref{eq:bmatching}.
    We set $\mathbf{l}=\mathbf{0}$, $\mathbf{m}=\mathbf{1}$, $\mathbf{a}=\mathbf{0}$, and $\mathbf{b}=\mathbf{2}$.
    Then we set $\Sodd$ to be the set of all vertices of odd degree in $\tau$, and $\Seven$ the set of all remaining vertices.

    By Theorem~\ref{thm:bmatching} we obtain an optimal $b$-matching $M$ with the specified parameters in strongly polynomial time.
    This way, all odd degree vertices in $\tau$ obtain a degree increased by exactly one and all remaining vertices
    (of degree two) either stay unchanged or become degree-$4$-vertices.
 
    Due to Theorem~\ref{thm:tree} we can, in polynomial time, obtain a degree-$3$-bounded
    $1$-tree of at most the weight of an optimal TSP solution,
    because any TSP solution is a degree-2-bounded $1$-tree. 
    For the remaining argument we use that, similar to the matchings in Christofides' algorithm, 
    any TSP solution is composed of two edge-disjoint $b$-matchings obeying the specified restrictions. 
    A minimum weight solution thus has at most half the weight of an optimal TSP solution. 
\end{proof}

In order to state the algorithm, we need a few more definitions.
We use bi-directed arcs, where we mark the tails with $+$ and the heads with $-$.
Thus for an arc with two heads $u$ and $v$, we write \bi{u}{v}.
For an arc with tail $u$ and head $v$, we write $(u,v)$ instead of
$\{u^+,v^-\}$. 
In the context of this paper, there are no arcs $\{u^+,v^+\}$.
Sometimes, we are interested in the vertices of arcs, independent of directions.
In these cases, we refer to arcs as edges.

Let $A$ be the arc set and $V$ the vertex set of a bi-directed graph \MG.
Let $v \in V$ be a vertex of \MG. 
The in-degree of $v$ is $\indeg{v} = |\{a \in A : \exists u \in V, a = (u,v) \mbox{ or } a = \bi{u}{v}\}|$.
Similarly, $\outdeg{v} = |\{a \in A : \exists u \in V, a = (v,u) \}|$.

We design an algorithm \alg that relies on graph transformations specified by the following rule.
\begin{myrule}\label{rule:join}
    Let \MG be a bi-directed graph and let $v$ be a vertex given as input such 
    that $\outdeg{v}=2$. Let $w \neq w'$ be the vertices such that the arcs $(v,w),(v,w')$
    are in \MG. We remove both $(v,w)$ and $(v,w')$ from \MG and add the arc
    $\bi{w}{w'}$.
\end{myrule}

We also need the reverse operation of Rule~\ref{rule:join}. If we
obtained $\bi{w}{w'}$ due to an operation of Rule~\ref{rule:join} to $v$, to reverse
Rule~\ref{rule:join} at $\bi{w}{w'}$ 
means that we remove $\bi{w}{w'}$ and add the two arcs $(v,w),(v,w')$.

Let $G'$ be a bi-directed graph. 
Suppose we are given a set of vertices $S \subset V(G')$ and vertices $w,w' \in S$, $v \in V(G') \setminus S$ with the following properties:
(i) there are no arcs of $G'$ between $S$ and $V(G') \setminus S$; 
(ii) each connected component of $G'$ is a cactus;
(iii) $w$ and $w'$ are connected by an arc $\bi{w}{w'}$ in $G'$, obtained due to an application of Rule~\ref{rule:join} to $v$.
Then we say that $\bi{w}{w'}$ is an \emph{entry site} of the subgraph of $G'$ induced by $S$.
Let $C$ be the block of $G'$ that contains $v$. 
Then after reversing the application of Rule~\ref{rule:join} at $\bi{w}{w'}$, $v$ is an \emph{exit point of $C$} (but not of any other block). 

Fig.~\ref{fig:algorithm}(c),(d) shows an example of an entry site and an exit point.
The reason to consider entry sites and exit points is to create a cactus such
that its blocks form a specific hierarchy that allows us to exclude long (and
therefore expensive) shortcuts when constructing a Hamiltonian tour from the
cactus.

\begin{algorithm}[bt]
\TitleOfAlgo{\alg}
  \label{alg:tsp}
  \SetAlgoNoLine
  \SetKwInOut{Input}{Input}\SetKwInOut{Output}{Output}
  \Input{An instance $G$ of \TSP.}
  \Output{A Hamiltonian tour in $G$.}
  Apply Theorem~\ref{thm:eul} to obtain a degree-$4$-bounded spanning Eulerian
  subgraph $H$ of $2G$\;
  Find an Eulerian tour in $H$ and orient the edges along the tour\;
  Apply Rule~\ref{rule:join} to all vertices of degree $4$ in $H$. Let $H'$ be the
  resulting graph \tcp*{The graph $H'$ is collection of disjoint cycles.}
  Choose a cycle $\hat{C}$ in $H'$ and set $K = \hat{C}$\tcp*{$K$ is a cactus.}
  \While{$V(K) \neq V(G)$}{
      Determine an entry site $a$ of $K$\;
      Reverse the application of Rule~\ref{rule:join} at $a$\;
      Update $K$ to include the additional cycle as a new block\;
  }
  Let $\hat{C}'$ be the block in $K$ that contains $V(\hat{C})$\;
  Let $K'$ be $K$ with all orientations of edges removed\; 
  \For{each block $C$ of $K'$ except $\hat{C}'$}{
      Let $v$ be the exit point of $C$ (w.\,r.\,t.~$K$)\;
      Let $e,e'$ be the two edges incident to $v$ in $C$ such that $c(e) \le
      c(e')$\;
      Orient the edges along $C$ in $K'$, starting with $e$ from $v$\;
  }
  Orient the edges along $\hat{C}'$ in $K'$ (in arbitrary direction)\;
  Apply Rule~\ref{rule:join} to all vertices of degree $4$ in $K'$\; 
  Return the resulting tour $K''$ (without orientations).
\end{algorithm}
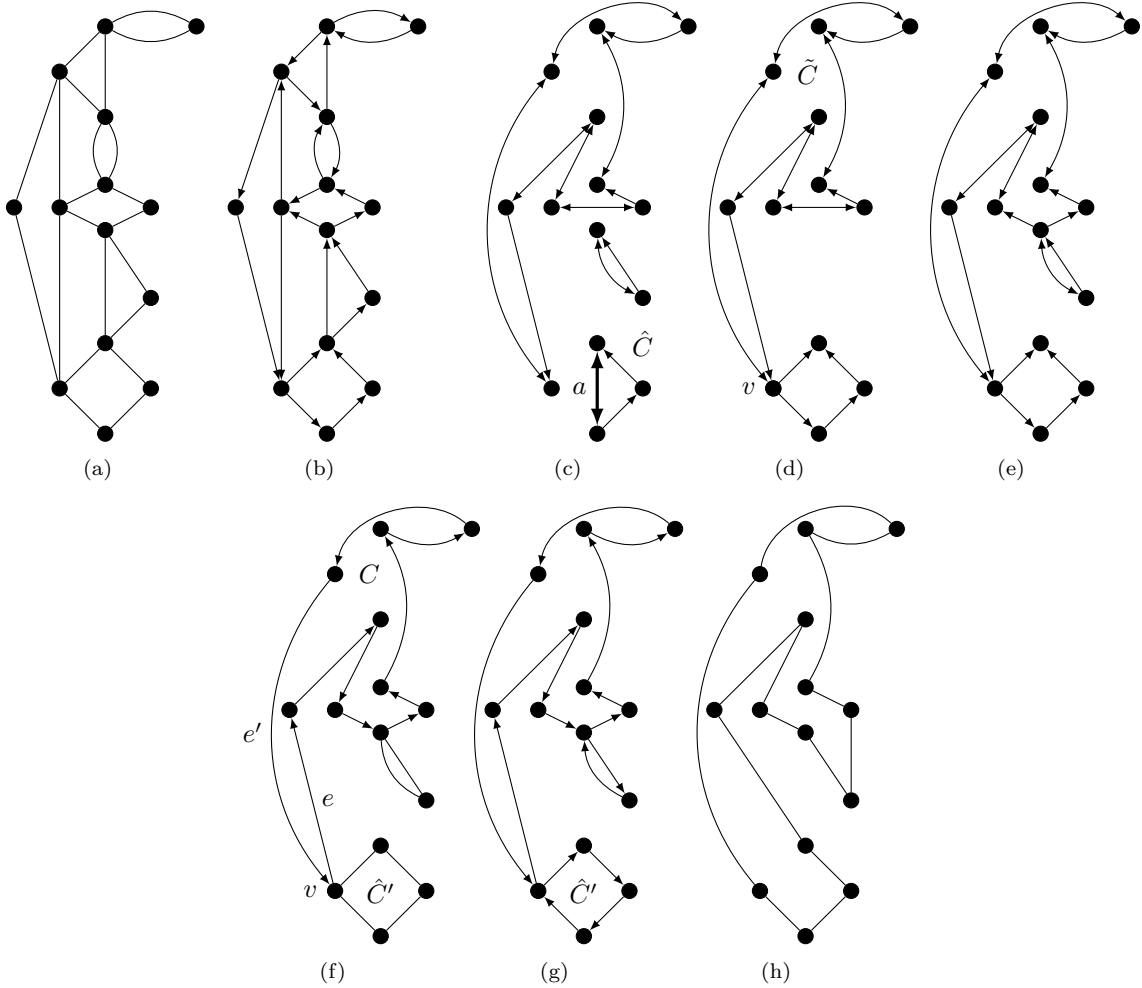
\begin{figure}[bt]
  \centering
  \begin{subfigure}[b]{2.8cm}
    \centering
    \begin{tikzpicture}[scale=0.6]
      \node (1) at (2,0) [vertex]{};
      \node (2) at (3,1) [vertex]{};
      \node (3) at (1,1) [vertex]{};
      \node (4) at (2,2) [vertex]{};
      \node (5) at (3,3) [vertex]{};
      \node (6) at (2,4.5) [vertex]{};
      \node (7) at (3,5) [vertex]{};
      \node (8) at (1,5) [vertex]{};
      \node (9) at (0,5) [vertex]{};
      \node (10) at (2,5.5) [vertex]{};
      \node (11) at (2,7) [vertex]{};
      \node (12) at (1,8) [vertex]{};
      \node (13) at (4,9) [vertex]{};
      \node (14) at (2,9) [vertex]{};
      \draw[](3)--(4);
      \draw[](4)--(6);
      \draw[](6)--(8);
      \draw[](8)--(3);
      \draw[](3)--(1);
      \draw[](1)--(2);
      \draw[](2)--(4);
      \draw[](4)--(5);
      \draw[](5)--(6);
      \draw[](6)--(7);
      \draw[](7)--(10);
      \draw[](10) to[bend left] (11);
      \draw[](11)--(14);
      \draw[](14) to[bend left] (13);
      \draw[](13) to[bend left] (14);
      \draw[](14)--(12);
      \draw[](12)--(11);
      \draw[](11) to[bend left] (10);
      \draw[](10)--(8);
      \draw[](8)--(12);
      \draw[](12)--(9);
      \draw[](9)--(3);
    \end{tikzpicture}
    \caption{$~$}
    \label{fig:H}
  \end{subfigure}
  \begin{subfigure}[b]{2.8cm}
    \centering
    \begin{tikzpicture}[->,scale=0.6]
      \node (1) at (2,0) [vertex]{};
      \node (2) at (3,1) [vertex]{};
      \node (3) at (1,1) [vertex]{};
      \node (4) at (2,2) [vertex]{};
      \node (5) at (3,3) [vertex]{};
      \node (6) at (2,4.5) [vertex]{};
      \node (7) at (3,5) [vertex]{};
      \node (8) at (1,5) [vertex]{};
      \node (9) at (0,5) [vertex]{};
      \node (10) at (2,5.5) [vertex]{};
      \node (11) at (2,7) [vertex]{};
      \node (12) at (1,8) [vertex]{};
      \node (13) at (4,9) [vertex]{};
      \node (14) at (2,9) [vertex]{};
      \draw[>=latex](3)--(4);
      \draw[>=latex](4)--(6);
      \draw[>=latex](6)--(8);
      \draw[>=latex](8)--(3);
      \draw[>=latex](3)--(1);
      \draw[>=latex](1)--(2);
      \draw[>=latex](2)--(4);
      \draw[>=latex](4)--(5);
      \draw[>=latex](5)--(6);
      \draw[>=latex](6)--(7);
      \draw[>=latex](7)--(10);
      \draw[>=latex](10) to[bend left] (11);
      \draw[>=latex](11)--(14);
      \draw[>=latex](14) to[bend left] (13);
      \draw[>=latex](13) to[bend left] (14);
      \draw[>=latex](14)--(12);
      \draw[>=latex](12)--(11);
      \draw[>=latex](11) to[bend left] (10);
      \draw[>=latex](10)--(8);
      \draw[>=latex](8)--(12);
      \draw[>=latex](12)--(9);
      \draw[>=latex](9)--(3);
    \end{tikzpicture}
    \caption{$~$}
    \label{fig:Eulerian}
  \end{subfigure}
  \quad
  \begin{subfigure}[b]{2.8cm}
    \centering
    \begin{tikzpicture}[->,scale=0.6]
      \node (1) at (2,0) [vertex]{};
      \node (2) at (3,1) [vertex]{};
      \node (3) at (1,1) [vertex]{};
      \node (4) at (2,2) [vertex]{};
      \node (5) at (3,3) [vertex]{};
      \node (6) at (2,4.5) [vertex]{};
      \node (7) at (3,5) [vertex]{};
      \node (8) at (1,5) [vertex]{};
      \node (9) at (0,5) [vertex]{};
      \node (10) at (2,5.5) [vertex]{};
      \node (11) at (2,7) [vertex]{};
      \node (12) at (1,8) [vertex]{};
      \node (13) at (4,9) [vertex]{};
      \node (14) at (2,9) [vertex]{};
      \node at (3,2) {$\hat{C}$};
      \draw[>=latex,<->](5) to[bend left] (6);
      \draw[>=latex, <->](3) to[bend left=40] (12);
      \draw[>=latex,<->,very thick](1) -- node[left]{$a$} (4);
      \draw[>=latex](1)--(2);
      \draw[>=latex](2)--(4);
      \draw[>=latex](5)--(6);
      \draw[>=latex,<->](7)--(8);
      \draw[>=latex](7)--(10);
      \draw[>=latex,<->](8) -- (11);
      \draw[>=latex,<->](10) to[bend right] (14);
      \draw[>=latex,<->](12) to[bend left=60] (13);
      \draw[>=latex](13) to[bend left] (14);
      \draw[>=latex,<->](9)--(11);
      \draw[>=latex](9)--(3);
    \end{tikzpicture}
    \caption{$~$}
    \label{fig:before-while}
  \end{subfigure}
  \begin{subfigure}[b]{2.8cm}
    \centering
    \begin{tikzpicture}[->,scale=0.6]
      \node (1) at (2,0) [vertex]{};
      \node (2) at (3,1) [vertex]{};
      \node (3) at (1,1) [vertex, label = left:$v$]{};
      \node (3) at (1,1) [vertex]{};
      \node (4) at (2,2) [vertex]{};
      \node (7) at (3,5) [vertex]{};
      \node (8) at (1,5) [vertex]{};
      \node (9) at (0,5) [vertex]{};
      \node (10) at (2,5.5) [vertex]{};
      \node (11) at (2,7) [vertex]{};
      \node (12) at (1,8) [vertex]{};
      \node (13) at (4,9) [vertex]{};
      \node (14) at (2,9) [vertex]{};
      \node at (1.75,8) {$\tilde{C}$};
      \draw[>=latex, <->](3) to[bend left=40] (12);
      \draw[>=latex](1)--(2);
      \draw[>=latex](2)--(4);
      \draw[>=latex,<->](7)--(8);
      \draw[>=latex](7)--(10);
      \draw[>=latex,<->](8) -- (11);
      \draw[>=latex,<->](10) to[bend right] (14);
      \draw[>=latex,<->](12) to[bend left=60] (13);
      \draw[>=latex](13) to[bend left] (14);
      \draw[>=latex,<->](9)--(11);
      \draw[>=latex](9)--(3);
      \draw[>=latex](3)--(1);
      \draw[>=latex](3)--(4);
    \end{tikzpicture}
    \caption{$~$}
    \label{fig:while}
  \end{subfigure}
  \begin{subfigure}[b]{2.8cm}
    \centering
    \begin{tikzpicture}[->,scale=0.6]
      \node (1) at (2,0) [vertex]{};
      \node (2) at (3,1) [vertex]{};
      \node (3) at (1,1) [vertex]{};
      \node (4) at (2,2) [vertex]{};
      \node (5) at (3,3) [vertex]{};
      \node (6) at (2,4.5) [vertex]{};
      \node (7) at (3,5) [vertex]{};
      \node (8) at (1,5) [vertex]{};
      \node (9) at (0,5) [vertex]{};
      \node (10) at (2,5.5) [vertex]{};
      \node (11) at (2,7) [vertex]{};
      \node (12) at (1,8) [vertex]{};
      \node (13) at (4,9) [vertex]{};
      \node (14) at (2,9) [vertex]{};
      \draw[>=latex,<->](5) to[bend left] (6);
      \draw[>=latex, <->](3) to[bend left=40] (12);
      \draw[>=latex](1)--(2);
      \draw[>=latex](2)--(4);
      \draw[>=latex](5)--(6);
      \draw[>=latex](7)--(10);
      \draw[>=latex,<->](8) -- (11);
      \draw[>=latex,<->](10) to[bend right] (14);
      \draw[>=latex,<->](12) to[bend left=60] (13);
      \draw[>=latex](13) to[bend left] (14);
      \draw[>=latex,<->](9)--(11);
      \draw[>=latex](9)--(3);
      \draw[>=latex](3)--(1);
      \draw[>=latex](3)--(4);
      \draw[>=latex](6)--(7);
      \draw[>=latex](6)--(8);
    \end{tikzpicture}
    \caption{$~$}
    \label{fig:after-while}
  \end{subfigure}
  \begin{subfigure}[b]{2.8cm}
    \centering
    \begin{tikzpicture}[scale=0.6]
      \node (1) at (2,0) [vertex]{};
      \node (2) at (3,1) [vertex]{};
      \node (3) at (1,1) [vertex, label=left:$v$]{};
      \node (4) at (2,2) [vertex]{};
      \node (5) at (3,3) [vertex]{};
      \node (6) at (2,4.5) [vertex]{};
      \node (7) at (3,5) [vertex]{};
      \node (8) at (1,5) [vertex]{};
      \node (9) at (0,5) [vertex]{};
      \node (10) at (2,5.5) [vertex]{};
      \node (11) at (2,7) [vertex]{};
      \node (12) at (1,8) [vertex]{};
      \node (13) at (4,9) [vertex]{};
      \node (14) at (2,9) [vertex]{};
      \node at (2,1) {$\hat{C}'$};
      \node at (1.75,8) {$C$};
      \draw[>=latex](5) to[bend left] (6);
      \draw[>=latex,->](12) to[bend right=40] node[left]{$e'$} (3);
      \draw[>=latex](2)--(1);
      \draw[>=latex](4)--(2);
      \draw[>=latex](5)--(6);
      \draw[>=latex,->](7)--(10);
      \draw[>=latex,->](11) -- (8);
      \draw[>=latex,->](10) to[bend right] (14);
      \draw[>=latex,->](13) to[bend right=60] (12);
      \draw[>=latex,->](14) to[bend right] (13);
      \draw[>=latex,->](9)--(11);
      \draw[>=latex,->](3)-- node[right]{$e$} (9);
      \draw[>=latex](1)--(3);
      \draw[>=latex](3)--(4);
      \draw[>=latex,->](6)--(7);
      \draw[>=latex,->](8)--(6);
    \end{tikzpicture}
    \caption{$~$}
    \label{fig:for}
  \end{subfigure}
  \begin{subfigure}[b]{2.8cm}
    \centering
    \begin{tikzpicture}[scale=0.6]
      \node (1) at (2,0) [vertex]{};
      \node (2) at (3,1) [vertex]{};
      \node (3) at (1,1) [vertex]{};
      \node (4) at (2,2) [vertex]{};
      \node (5) at (3,3) [vertex]{};
      \node (6) at (2,4.5) [vertex]{};
      \node (7) at (3,5) [vertex]{};
      \node (8) at (1,5) [vertex]{};
      \node (9) at (0,5) [vertex]{};
      \node (10) at (2,5.5) [vertex]{};
      \node (11) at (2,7) [vertex]{};
      \node (12) at (1,8) [vertex]{};
      \node (13) at (4,9) [vertex]{};
      \node (14) at (2,9) [vertex]{};
      \node at (2,1) {$\hat{C}'$};
      \draw[>=latex,->](5) to[bend left] (6);
      \draw[>=latex,->](12) to[bend right=40] (3);
      \draw[>=latex,->](2)--(1);
      \draw[>=latex,->](4)--(2);
      \draw[>=latex,->](6)--(5);
      \draw[>=latex,->](7)--(10);
      \draw[>=latex,->](11) -- (8);
      \draw[>=latex,->](10) to[bend right] (14);
      \draw[>=latex,->](13) to[bend right=60] (12);
      \draw[>=latex,->](14) to[bend right] (13);
      \draw[>=latex,->](9)--(11);
      \draw[>=latex,->](3)--(9);
      \draw[>=latex,->](1)--(3);
      \draw[>=latex,->](3)--(4);
      \draw[>=latex,->](6)--(7);
      \draw[>=latex,->](8)--(6);
    \end{tikzpicture}
    \caption{$~$}
    \label{fig:after-for}
  \end{subfigure}
  \begin{subfigure}[b]{2.8cm}
    \centering
    \begin{tikzpicture}[scale=0.6]
      \node (1) at (2,0) [vertex]{};
      \node (2) at (3,1) [vertex]{};
      \node (3) at (1,1) [vertex]{};
      \node (4) at (2,2) [vertex]{};
      \node (5) at (3,3) [vertex]{};
      \node (6) at (2,4.5) [vertex]{};
      \node (7) at (3,5) [vertex]{};
      \node (8) at (1,5) [vertex]{};
      \node (9) at (0,5) [vertex]{};
      \node (10) at (2,5.5) [vertex]{};
      \node (11) at (2,7) [vertex]{};
      \node (12) at (1,8) [vertex]{};
      \node (13) at (4,9) [vertex]{};
      \node (14) at (2,9) [vertex]{};
      \draw[](5) -- (7);
      \draw[](3) to[bend left=40] (12);
      \draw[](1)--(2);
      \draw[](2)--(4);
      \draw[](5)--(6);
      \draw[](7)--(10);
      \draw[](8) -- (11);
      \draw[](10) to[bend right] (14);
      \draw[](12) to[bend left=63] (13);
      \draw[](13) to[bend left] (14);
      \draw[](9)--(11);
      \draw[](3)--(1);
      \draw[](4)--(9);
      \draw[](6)--(8);
    \end{tikzpicture}
    \caption{$~$}
    \label{fig:tour}
  \end{subfigure}

  \caption{An example of intermediate graphs in \alg.
  (a) The graph $H$.
  (b) The Eulerian tour in $H$.
  (c) The graph $H'$ before the while loop with entry site $a$ of $\hat{C}$. 
  (d) The graph $K$ after one iteration of the while loop. The vertex $v$ is the exit point of $\tilde{C}$.
  (e) The graph $K$ after the while loop.
  (f) The graph $K'$ after one iteration of the for loop where $v$ is the exit point of $C$.
  (g) The graph $K'$ after the for loop.
  (h) The graph $K''$.}
  \label{fig:algorithm}
\end{figure}

\subsection{Proof of Theorem~\ref{thm:main}}
    We show that \alg
    is a polynomial time 
    $(3\beta/4 + 3 \beta^2/4)$-approx\-imation algorithm for \TSP.

    Let $G$ be an instance of \TSP. We have to verify that \alg 
    indeed computes a Hamiltonian tour (see Fig.~\ref{fig:algorithm}).
    Since $G$ is a complete graph, we can apply Theorem~\ref{thm:eul} to obtain a degree-$4$-bounded spanning Eulerian subgraph $H$ of $2G$. 
    Since $H$ is Eulerian, we can
    efficiently find an Eulerian tour in $H$ and thus orient the edges
    accordingly. We use the following well known property of $H$, observed by Euler.
    \begin{claim}\label{claim:inout}
    Let $S \subseteq V(H)$. Then $|\delta^+(S)| = |\delta^-(S)|$, where
    $\delta^+(S)$ and $\delta^-(S)$ are the sets of arcs from $S$ to $V(H)
    \setminus S$ resp.~from $V(H) \setminus S$ to $S$.
    \end{claim}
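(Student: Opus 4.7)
The plan is to deduce the set-level equation from the familiar vertex-level balance at each vertex of an Eulerian orientation, then argue that the "internal" arcs of $S$ cancel in the counting.

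First I would observe the well-known fact that when the edges of an Eulerian graph are oriented along an Eulerian tour, every vertex $v$ satisfies $\mathrm{deg}^+(v) = \mathrm{deg}^-(v)$: each visit of the tour to $v$ enters along one arc and leaves along another, so arcs at $v$ are paired up by the tour into in-out pairs. In particular this holds for every $v \in S$, and summing the equality over $v \in S$ gives
\[
\sum_{v \in S} \mathrm{deg}^+(v) \;=\; \sum_{v \in S} \mathrm{deg}^-(v).
\]

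Next I would partition the arcs incident to $S$ into those with both endpoints in $S$ and those with exactly one endpoint in $S$. An arc $(u,w)$ with $u,w \in S$ contributes exactly once to $\sum_{v \in S} \mathrm{deg}^+(v)$ (through its tail $u$) and exactly once to $\sum_{v \in S} \mathrm{deg}^-(v)$ (through its head $w$); hence such arcs cancel on both sides of the displayed equation. What remains on the left is precisely the count of arcs with tail in $S$ and head outside $S$, which is $|\delta^+(S)|$; what remains on the right is the count of arcs with head in $S$ and tail outside $S$, which is $|\delta^-(S)|$. Therefore $|\delta^+(S)| = |\delta^-(S)|$.

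There is no real obstacle here: the statement is a standard consequence of Eulerian orientations, and the only care needed is the routine bookkeeping of which arcs cancel when summing vertex-level balance over a set. Since at this stage of the algorithm the edges of $H$ have been oriented along an Eulerian tour (and no applications of Rule~\ref{rule:join} have yet introduced bi-directed arcs), the above argument applies verbatim.
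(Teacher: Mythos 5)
Your proof is correct: summing the vertex-level balance $\mathrm{deg}^+(v)=\mathrm{deg}^-(v)$ over $v\in S$ and cancelling the arcs internal to $S$ is exactly the standard argument, and your remark that no bi-directed arcs from Rule~\ref{rule:join} exist yet at this stage is the right sanity check. The paper itself gives no proof, merely citing the claim as a well-known property observed by Euler, so your write-up simply supplies the intended routine argument.
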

    In particular, Claim~\ref{claim:inout} implies that for each degree-$4$ vertex $v$ of $H$, $\indeg{v}=\outdeg{v}=2$. 
    Note that applying Rule~\ref{rule:join} to $v$ transforms $v$ into a degree-$2$ vertex 
    and that the in- and out-degrees of all other vertices stay unchanged. 
    Therefore we can apply Rule~\ref{rule:join} to every degree-$4$ vertex of $H$ and afterwards each vertex has a degree of exactly $2$, \ie,  $H'$ is a collection of cycles. 
   
    For the remaining algorithm, we need some insights regarding the existence
    of entry sites and exit points in $H'$. 
    To this end we first have to verify that after the while loop, $K$ is a cactus.
    Initially $K$ is a cycle and therefore a cactus. 
    Reversing Rule~\ref{rule:join} at an entry site does the following.
    The cycle $C$ of $K$ that contains the entry site $a$ obtains an additional vertex $v$.
    The cycle in $H'$ containing $v$ stays unchanged, but now it shares $v$ with $K$.
    Both increasing the length of $C$ and attaching a cycle to $K$ that only shares a single vertex with $K$ preserve the properties of cactus graphs.

    \begin{claim}\label{claim:entry}
    Before each run of the while loop, $K$ has an entry site.
    \end{claim}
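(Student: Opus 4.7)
My plan is to apply Claim~\ref{claim:inout} to the cut $S = V(K)$ in the oriented Eulerian graph $H$ and to analyze which arcs of $H$ cross $\delta(S)$. The guiding idea is that connectivity alone only yields a crossing arc in some direction, whereas Claim~\ref{claim:inout} upgrades this to arcs of both orientations, which is precisely what is needed to locate an entry site.

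First I identify the crossing arcs. Every arc of $H$ is either a surviving arc in $H'$ or one of the two outgoing arcs $(v,w),(v,w')$ of some degree-$4$ vertex $v$ that were removed by Rule~\ref{rule:join} and replaced by the join $\bi{w}{w'}$. By induction on the iterations of the while loop, $V(K)$ equals the union of the vertex sets of those cycles of $H'$ that have been incorporated as blocks of $K$: each iteration reverses a join at an entry site and then explicitly attaches the whole cycle through the new exit vertex. Since every surviving arc of $H$ lies inside a single cycle of $H'$, and each such cycle ends up either wholly inside or wholly outside $V(K)$, surviving arcs cannot cross $\delta(S)$. The only arcs of $\delta(S)$ are therefore the replaced pairs $(v,w),(v,w')$ for which the cycle of $H'$ containing $v$ and the cycle of $H'$ containing $\bi{w}{w'}$ lie on opposite sides of the cut.

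Next I classify these degree-$4$ vertices. Let $n^+$ count those with $v \in V(K)$ and $w, w' \notin V(K)$ (each contributing two arcs to $\delta^+(S)$), and let $n^-$ count those with $v \notin V(K)$ and $w, w' \in V(K)$ (each contributing two arcs to $\delta^-(S)$). Claim~\ref{claim:inout} gives $n^+ = n^-$, while connectivity of $H$ together with $V(K) \neq V(G)$ forces at least one crossing arc and hence $n^+ + n^- \geq 1$. Therefore $n^- \geq 1$.

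Any vertex counted by $n^-$ delivers an entry site: its join at $\bi{w}{w'}$ cannot yet have been reversed (otherwise $v$ would already belong to $V(K)$), so $\bi{w}{w'}$ is still present as an arc, its endpoints $w, w'$ lie in $V(K)$, and $v$ lives in a cactus component of the current graph that is disjoint from $K$. I expect the main obstacle to be confirming that no surviving arc of $H$ straddles the cut; this reduces to the inductive invariant that $V(K)$ is always a union of complete cycles of $H'$ rather than of fragments of them, and once that invariant is in hand the flow-balance step is immediate.
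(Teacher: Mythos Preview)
Your proposal is correct and follows essentially the same route as the paper: use connectivity of $H$ together with Claim~\ref{claim:inout} on the cut $S=V(K)$ to obtain an arc of $H$ entering $V(K)$, observe that such an arc must be one of the pair $(v,w),(v,w')$ removed by Rule~\ref{rule:join} (since arcs surviving into $H'$ stay within a single cycle and hence within one side of the cut), and conclude that the corresponding join $\bi{w}{w'}$ is an entry site of $K$. Your explicit invariant that $V(K)$ is always a union of full $H'$-cycle vertex sets, and your $n^+/n^-$ bookkeeping, make precise a couple of steps the paper leaves implicit, but the argument is the same.
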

    \begin{proof}
    Since $H$ is a connected graph, due to Claim~\ref{claim:inout} there are
    arcs from $V(H') \setminus V(K)$ to $V(K)$ in $H$. Let $\hat{a}=(v,w)$ be one of these
    arcs. Since we only removed arcs from $H$ by applying Rule~\ref{rule:join},
    the degree of $v$ in $H$ is $4$ and there is a second arc
    $\hat{a}'=(v,w')$ in $H$ that is not in $H'$.
    We conclude that there is an arc $\bi{w}{w'}$ in $K$.
    Since there are no arcs in $E(H')$ between $V(K)$ and $V(H) \setminus V(K)$ and $K$ is a cactus,    
    we conclude that $\bi{w}{w'}$ is an entry site.
    \end{proof}

    By Claim~\ref{claim:entry}, the while loop finds an entry site in each run and terminates with $V(K)=V(G)$. 
    We now show that in the for loop, the vertex $v$ is well defined.
    \begin{claim}\label{claim:exit}
    After the while loop, each cycle of $K$ except $\hat{C}'$ has exactly one exit point.
    \end{claim}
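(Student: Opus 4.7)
The plan is to argue by induction on the iterations of the while loop, tracking how the block structure of $K$ and the set of exit points evolve simultaneously. At the start of the while loop, $K = \hat{C}$ is a single block with no exit points, so the statement is vacuous.

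For the inductive step, consider an iteration that reverses Rule~\ref{rule:join} at an entry site $\bi{w}{w'}$. By the definition of an entry site, $\bi{w}{w'}$ was produced by an earlier application of Rule~\ref{rule:join} at some vertex $v \notin V(K)$, and the reversal removes $\bi{w}{w'}$ and inserts $v$ together with the two arcs $(v,w),(v,w')$. Thus the block $C_{\text{old}}$ of $K$ that contained $\bi{w}{w'}$ is lengthened by the single vertex $v$. Just before the reversal, $v$ lay on some cycle $\tilde{C}$ of $H'$; since the cycles of $H'$ are pairwise vertex-disjoint and $v \notin V(K)$, the cycle $\tilde{C}$ is vertex-disjoint from $V(K)$ just before the reversal. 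Hence after the reversal $\tilde{C}$ becomes a new block of $K$, sharing only the vertex $v$ with the rest of $K$.

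By the definition of an exit point, $v$ becomes an exit point of $\tilde{C}$ and of no other block. Therefore the newly created block $\tilde{C}$ has exactly one exit point, while the extension of $C_{\text{old}}$ by $v$ adds no exit point to $C_{\text{old}}$ or to any previously existing block. Combined with the inductive hypothesis, every block of $K$ at the end of this iteration, other than the block containing $V(\hat{C})$ (which becomes $\hat{C}'$ once the while loop terminates), has exactly one exit point.

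The main subtlety is verifying that $\tilde{C}$ is attached to $K$ solely at the single vertex $v$, so that $\tilde{C}$ is indeed a single new block of the cactus $K$ and $v$ is its unique exit point; this follows from the pairwise vertex-disjointness of the cycles of $H'$ together with the fact that the reversal introduces only $v$ (and not any other vertex of $\tilde{C}$) into $K$.
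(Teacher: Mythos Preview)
Your proof is correct and follows essentially the same approach as the paper's: both argue that each iteration of the while loop attaches exactly one new cycle of $H'$ as a block of $K$ and designates the attaching vertex $v$ as that block's unique exit point, with no exit points created for existing blocks. Your inductive presentation is slightly more explicit than the paper's, and the disjointness of $\tilde{C}$ from $V(K)$ that you highlight implicitly relies on the (easily verified) invariant that $V(K)$ is always a union of full cycles of $H'$, which the paper establishes just before Claim~\ref{claim:entry}.
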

    \begin{proof}
    The cactus $K$ is formed by reversing applications of Rule~\ref{rule:join} in order to add cycles.
    For each cycle $C$ that is added to $K$ in the while loop, there are vertices $w,w' \in V(K)$ and $v\in V(C)$ such that reversing the application of Rule~\ref{rule:join} at \bi{w}{w'} introduces the arcs $(v,w)$ and $(v,w')$.
    This means that $v$ is a cut vertex of the transformed $K$ and thus an exit point of $C$.
    Each cycle $C'$ of $H'$ except $\hat{C}$ is added to $K$ exactly once during the while loop and there is exactly one cycle $C''$ of $K$ with $V(C') \subseteq V(C'')$.
    We conclude that $C''$ has exactly one exit point.
    \end{proof} 
 
    Therefore also the for loop terminates. 
    We now analyze the last step of \alg.
    \begin{claim}\label{claim:connected}
    $K''$ is a Hamiltonian tour.
    \end{claim}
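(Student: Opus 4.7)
The plan is to analyze the degree structure of $K'$, argue that the for-loop orientation makes Rule~\ref{rule:join} applicable at every degree-$4$ vertex, and then show that each such application merges two blocks of the cactus until only a single Hamiltonian cycle remains.

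First I would establish that every vertex of $K$ (hence of $K'$) has degree exactly $2$ or $4$. Since $H$ is degree-$4$ bounded, each application of Rule~\ref{rule:join} on the way from $H$ to $H'$ reduces a degree-$4$ vertex to degree~$2$, and in the while loop the rule can be reversed at most once at any given vertex $v$, since each reversal restores the unique arc pair previously removed by Rule~\ref{rule:join} at $v$. Vertices at which a reversal occurs end up at degree~$4$ in $K$; the rest stay at degree~$2$. Combined with the block structure described by Claim~\ref{claim:exit}, this shows that every degree-$4$ vertex lies in \emph{exactly} two blocks of $K'$: its original cycle from $H'$ and the enlarged block into which the reversal inserted it.

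Next, orienting each block consistently along its cycle (as the for loop does for every $C \neq \hat{C}'$, and then for $\hat{C}'$ itself) gives every vertex of that block indegree and outdegree~$1$ within the block; so a vertex lying in $k$ blocks of $K'$ has $\outdeg{v} = \indeg{v} = k$. In particular every degree-$4$ vertex of $K'$ has $\outdeg{v} = 2$, so Rule~\ref{rule:join} is indeed applicable to it.

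The key structural observation is then that at a cut vertex $v$ lying in blocks $C_1, C_2$, Rule~\ref{rule:join} deletes the unique outgoing arc of $v$ in each $C_i$ and joins their heads by a bi-directed edge, which in the undirected sense glues $C_1$ and $C_2$ into a single cycle visiting $V(C_1) \cup V(C_2)$ with $v$ appearing only once. Because every vertex of $K'$ lies in at most two blocks, the block--cut tree of $K'$ has every cut-vertex node of degree two, and each application of Rule~\ref{rule:join} contracts such a node together with its two neighbouring block nodes into a single block node. Iterating this until no degree-$4$ vertex remains leaves a single block---a cycle visiting every vertex of $V(G)$ exactly once, namely the Hamiltonian tour $K''$. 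The main obstacle is verifying that a vertex belongs to no more than two blocks of $K'$; once this bookkeeping is done, the cycle-merging argument runs mechanically.
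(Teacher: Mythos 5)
Your proof is correct and takes essentially the same route as the paper: it iteratively applies Rule~\ref{rule:join} at the degree-$4$ vertices of $K'$, showing that each application merges the two blocks of the cactus meeting at that vertex into a single cycle, until one Hamiltonian cycle remains; your block--cut-tree contraction is just a repackaging of the paper's induction invariant (that each intermediate graph is a cactus in which every remaining degree-$4$ vertex has its two out-arcs in different blocks). The only cosmetic difference is that you derive ``each degree-$4$ vertex lies in exactly two blocks'' from the construction history, whereas it already follows automatically from $K'$ being a degree-$4$-bounded cactus whose blocks are cycles.
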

    \begin{proof}
    Let $j$ be the number of degree-$4$ vertices in $K'$ after the for loop.
    We observe that the order of applications of Rule~\ref{rule:join} does not change the resulting graph.
    Therefore we may assign indices $1,2,\dotsc,j$ to the degree-$4$ vertices of $K'$ such that Rule~\ref{rule:join} is applied iteratively in order of the indices.
    Let $K_i$ be the obtained graph after $i$ iterations.
    Since applications of Rule~\ref{rule:join} do not increase degrees of vertices, each $K_i$ is degree-4-bounded.
    We now show by induction that the following invariant is true.
    For each $i \ge 0$, $K_i$ is a cactus and for each degree-4 vertex $v$ of $K_i$, there are arcs $(v,u),(v,u')$ with $u$ and $u'$ in different blocks.

    Initially, the invariant is true since $K_0 = K'$ is a cactus and we oriented each cycle of $K'$ in one direction.
    Now suppose that $1 \le i \le j$ and that the invariant is true for $K_{i-1}$.
    Let $v'$ be the vertex such that $K_i$ is obtained by applying Rule~\ref{rule:join} to $v'$ in $K_{i-1}$.
    By the invariant, there are two blocks $C,C'$ of $K_{i-1}$ and vertices $w,w'$ such that $w \in V(C)$, $w' \in V(C')$, and $(v',w),(v',w') \in E(K_{i-1})$.
    Then $C,C'$ are edge disjoint cycles and thus removing both $(v',w)$ and $(v',w')$ leaves a connected graph.
    The new arc $\bi{w}{w'}$ closes a cycle with vertex set $V(C) \cup V(C')$, which implies that $K_i$ is $2$-edge connected.
    We conclude that $K_i$ is a cactus, since any edge contained in two different cycles of $K_i$ would also be contained in two different cycles of $K_{i-1}$.

    Each degree-4 vertex $v \neq v'$ of $K_{i-1}$ is also a cut vertices in $K_i$ since removing $v$ would leave two components only one of which containing the arcs changed by applying Rule~\ref{rule:join}.
    In particular, the property that there are arcs $(v,u),(v,u')$ with $u$ and $u'$ in different blocks of $K_i$ did not change.
    Thus the invariant also holds for $K_i$.

    We conclude that $K''$ is a cactus with only vertices of degree 2 and therefore a Hamiltonian tour.
    \end{proof}

    To see that \alg runs in polynomial time, observe that
    each separate step can be done in polynomial time. The iterated application of
    Rule~\ref{rule:join} in the beginning and the end is done at most once for
    each vertex. Similarly, the while loop and the for loop are run at most
    linearly often.

    We finish the proof by analyzing the approximation ratio of \alg.
    We aim to partition the edges of $H$ into paths such that for each edges $e$ of $K''$ there is a path in $H$ between the ends of $e$.

    First we define a family $\mathcal{P}'$ of classes $P'_e \subseteq E(H)$, for each $e \in E(K')$.
    If $e \in E(H)$ we set $P'_e = \{e\}$.
    Otherwise, if $e \notin E(H)$, $e$ must have been obtained by applying Rule~\ref{rule:join}, replacing two edges $e',e''$ and we set $P'_e = \{e',e''\}$.

    We claim that $\mathcal{P}'$ is a partition of $E(H)$.
    Let us fix an edge $e' \in E(H)$.
    Then either $e' \in E(K')$ and thus $P'_{e'} = \{e'\}$, or there is an $e''$ such that
    $e',e''$ were replaced by $e$.
    But then $e \in E(K')$ since $e = \bi{u}{v}$ for two vertices $u,v$ and thus Rule~\ref{rule:join} cannot replace $e$.
    Therefore $P'_e = \{e',e''\}$. 
    Since $e''$ also cannot have two heads, $e'' \in E(H)$.
    Furthermore, $e'$ uniquely determines its class in $\mathcal{P}'$ and therefore it cannot be in two different classes. 

    Analogously, we define a family $\mathcal{P}''$ of classes $P''_e \subseteq E(K')$, for each $e \in E(K'')$.
    If $e \in K'$, we set $P''_e = \{e\}$.
    Otherwise, if $e \notin E(K')$, $e$ must have been obtained by applying Rule~\ref{rule:join}, replacing two edges $e',e''$ and we set $P''_e = \{e',e''\}$.
    With the same arguments as for $\mathcal{P}'$, $\mathcal{P}''$ is a partition of $E(K')$.

    We now define a partition $\mathcal{P}$ of $E(H)$ by combining $\mathcal{P}'$ and $\mathcal{P}''$.
    For each edge $e \in E(K'')$, we define the class $P_e$ of $\mathcal{P}$ as $\bigcup_{e' \in P''_e} P'_{e'}$.

    Let $E$ be the set of all edges that in some iteration of the for loop of \alg were denoted by $e$.
    Analogously, let $E'$ be the set of all edges that in some iteration of the for loop of \alg were denoted by $e'$.    
    \begin{claim}
    \label{claim:H-edge}
    For each $f \in E(K'') \setminus E(K')$,
    there are vertices $s,v,t$ such that 
    $P''_f = \{\{s,v\},\{v,t\}\}$ with $\{s,v\} \in E(H)$, $\{v,t\} \in E$, and $f = \{s,t\}$. 
    \end{claim}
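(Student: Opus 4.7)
The plan is to trace $f$ back to the single application of Rule~\ref{rule:join} that created it. Since $f \in E(K'') \setminus E(K')$, there is a degree-$4$ vertex $v$ of $K'$ at which the final application of Rule~\ref{rule:join} replaced two outgoing arcs $(v,s)$ and $(v,t)$ by the bi-directed arc $f = \bi{s}{t}$. By the construction of $\mathcal{P}''$, this already gives $P''_f = \{\{v,s\}, \{v,t\}\}$ and $f = \{s,t\}$; what is left is to label the endpoints so that $\{v,t\} \in E$ and $\{s,v\} \in E(H)$.

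First I pin down the two blocks of $K'$ meeting at $v$. Because $K' = K$ is a cactus and $v$ has degree $4$, $v$ lies in exactly two blocks, each contributing degree $2$. By Claim~\ref{claim:exit}, $v$ is the exit point of exactly one of them, call it $C$; let $C'$ denote the other block. Since $C \neq \hat{C}'$, the for-loop iteration for $C$ orients the edges of $C$ starting from $v$ along the distinguished edge $e$, so the unique outgoing arc at $v$ inside $C$ is exactly $e$; defining $t$ as the head of $e$, we get $\{v,t\} = e \in E$.

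Second, I argue that the outgoing arc at $v$ inside $C'$ is an original $H$-edge. For this I trace the while-loop iteration that introduced $v$ into $K$: it reverses Rule~\ref{rule:join} at an entry site $\bi{w}{w'}$, restoring the arcs $(v,w)$ and $(v,w')$ --- both of which are original $H$-arcs, since $\bi{w}{w'}$ was produced by joining them during the first pass --- and attaches the cycle of $H'$ through $v$ as a new block of $K$. That new block is exactly $C$, while $(v,w)$ and $(v,w')$ become the two edges of $C'$ incident to $v$. No later step modifies the edges at $v$, because entry-site reversals only delete bi-directed arcs and both $(v,w), (v,w')$ are single-direction. Whether $C' = \hat{C}'$ (oriented arbitrarily) or $C'$ is processed by the for loop with a different exit point, the cyclic traversal of $C'$ yields exactly one outgoing arc at $v$; its underlying edge is one of $\{v,w\}, \{v,w'\}$ and therefore lies in $E(H)$. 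Letting $s$ be its other endpoint, $\{s,v\} \in E(H)$.

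Putting it together, the final application of Rule~\ref{rule:join} at $v$ removes precisely the two outgoing arcs $\{v,t\} = e \in E$ and $\{v,s\} \in E(H)$ and adds $f = \{s,t\}$, giving the required decomposition of $P''_f$. The main obstacle is the second step: ensuring that the outgoing arc at $v$ in $C'$ is an original $H$-edge rather than a bi-directed arc carried in from the first pass. This is handled by the identification of the two edges of $C'$ at $v$ with exactly the pair $(v,w), (v,w')$ restored at the entry-site reversal, which by definition belongs to $E(H)$.
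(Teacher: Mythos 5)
Your proof is correct and follows essentially the same route as the paper's: identify the degree-$4$ vertex $v$ as the exit point of exactly one of its two blocks, use the for-loop orientation to get the outgoing arc $e\in E$ in that block, and observe that the two edges of the parent block at $v$ are precisely the arcs $(v,w),(v,w')$ restored at the entry-site reversal, hence original $H$-edges. The paper phrases the last step via orientations (arcs with tails at $v$ cannot be two-headed arcs produced by Rule~1), but this is the same underlying fact you establish by tracing provenance.
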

    \begin{proof}
    Since $f \notin K'$, $f$ was obtained by applying Rule~\ref{rule:join} to a vertex $v$. 
    Therefore $P''_f = \{f',f''\} \subseteq E(K')$ where $f'$ and $f''$ have $v$ as common vertex.
    By the construction of $K''$, in $K'$ there are two blocks $C' \neq C''$ such that $f' \in E(C')$ and $f'' \in E(C'')$ 
    (see the proof of Claim~\ref{claim:connected}).
    For one of the two blocks, $v$ is its exit point and by renaming we assume that $C''$ is that block.
    Therefore in $K$, the two arcs $f',g' \in E(C')$ incident to $v$ have tails at $v$.
    In particular, $f'$ was not obtained from an application of Rule~\ref{rule:join} since the application only introduces arcs with two heads
    and thus $f' \in E(H)$.
    Due to the orientation in the for loop of $\alg$, we conclude that $f'' \in E$.
    The claim follows by setting $f' = \{s,v\}$ and $f'' = \{v,t\}$.
    \end{proof}

    We now show that $\mathcal{P}$ partitions $E(H)$ into short paths.
    \begin{claim}
    \label{claim:path}
    For each $e = \{s,t\} \in E(K'')$, the edges in $P_e$ form a path between $s$ and $t$ of length at most three in $H$.
    \end{claim}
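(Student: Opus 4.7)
The plan is to do a case distinction on whether $e\in E(K')$ or not, and then a sub-case distinction on whether the relevant $K'$-edges lie in $E(H)$ or arose from an application of Rule~\ref{rule:join}. In each case, the length of the resulting path is simply $|P'_{e_1}|+|P'_{e_2}|$, where $P''_e=\{e_1,e_2\}$ (or just $|P'_e|$ when $P''_e=\{e\}$), and each $|P'_\cdot|$ is either $1$ or $2$ by definition of $\mathcal{P}'$. The key point that makes these sets actually form a path (rather than an arbitrary two- or three-edge subgraph) is that Rule~\ref{rule:join} always replaces the two edges it undoes by a single edge on the opposite endpoints, so reversing the unpacking reattaches them at a common middle vertex.

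Concretely, first I would treat the case $e=\{s,t\}\in E(K')$. Then $P''_e=\{e\}$, so $P_e=P'_e$. If $e\in E(H)$ then $P_e=\{e\}$ is a length-$1$ path from $s$ to $t$; otherwise $e$ was produced by Rule~\ref{rule:join} at some vertex $v$, so $P_e=P'_e=\{\{s,v\},\{v,t\}\}$ is a length-$2$ path $s$--$v$--$t$ in $H$.

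For the second case $e=\{s,t\}\notin E(K')$, I would invoke Claim~\ref{claim:H-edge} to get $P''_e=\{\{s,v\},\{v,t\}\}$ with $\{s,v\}\in E(H)$ and $\{v,t\}\in E$. Thus $P'_{\{s,v\}}=\{\{s,v\}\}$. If in addition $\{v,t\}\in E(H)$, then $P'_{\{v,t\}}=\{\{v,t\}\}$ and $P_e=\{\{s,v\},\{v,t\}\}$ is a length-$2$ path. Otherwise $\{v,t\}$ was produced by Rule~\ref{rule:join} at some vertex $v'$, so $P'_{\{v,t\}}=\{\{v,v'\},\{v',t\}\}\subseteq E(H)$, and
\[
P_e=\{\{s,v\},\{v,v'\},\{v',t\}\}
\]
forms the length-$3$ path $s$--$v$--$v'$--$t$ in $H$.

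The only subtle point, and the one I would be most careful about, is justifying that each pair of edges introduced by reversing an application of Rule~\ref{rule:join} really shares the correct common vertex: this is immediate from the statement of Rule~\ref{rule:join}, which replaces arcs $(v,w),(v,w')$ by $\bi{w}{w'}$, so unpacking an edge on vertices $\{w,w'\}$ gives two edges meeting precisely at the interior vertex $v$. Once this is observed, the path structure and the length bound of $3$ follow in every subcase, completing the proof.
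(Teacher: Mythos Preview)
Your proof is correct and follows essentially the same approach as the paper: the same case split on $e\in E(K')$ versus $e\notin E(K')$, the same invocation of Claim~\ref{claim:H-edge} in the second case to ensure one of the two $K'$-edges already lies in $E(H)$, and the same unpacking of the remaining edge via Rule~\ref{rule:join}. Your explicit remark about why the unpacked edges share the correct common vertex is a welcome clarification that the paper leaves implicit.
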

    \begin{proof}
    Let us first assume that $e \in E(K')$.
    Then either $e \in E(H)$ or $P_e = \{e',e''\}$, where $e'=\{s,v\}$ and $e''=\{v,t\}$ for some vertex $v$.
    Since $e',e'' \in E(H)$, the claim follows.

    Otherwise, if $e \notin E(K')$, there are edges $f',f'' \in P''_e$ where $f' = \{s,v'\}$ and $f'' = \{v',t\}$ for some vertex $v'$.
    By Claim~\ref{claim:H-edge} we can rename the vertices and edges such that $f' \in E(H)$ and $f'' \in E$.
    Then either $f'' \in E(H)$ or there is a vertex $u''$ such that $P'_{f''} = \{\{v',u''\},\{u'',t\}\}$ with $\{v',u''\},\{u'',t\} \in E(H)$. 
    In both cases the claim follows.
    \end{proof}

    We now analyze the edge weights relative to $c(H)$. 
    For edges not in $H$ we have to consider a factor that is bounded due to the relaxed triangle inequality.  
    Let us fix an edge $f \in E(K'')$.
    We distinguish the possible sizes of $P_f$.
    If $|P_f| = 1$, $e \in E(H)$ and thus there is no factor to be considered.
    If $|P_f| = 2$, $P_f = \{f',f''\}$ for two adjacent edges $f',f'' \in E(H)$ and by the relaxed triangle inequality, 
    $c(f) \le \beta (c(f') + c(f''))$.
    
    If $|P_f| = 3$,
    by Claim~\ref{claim:H-edge} we have that $c(f) \le \beta \cdot (c(\hat{f}) + c(e))$ for an $\hat{f} \in E(H)$ 
    and an $e \in E$ with $P''_f = \{ \hat{f},e\}$.

    Let $e'\in E'$ be the edge that was compared to $e$ in the for loop.
    Since in the for loop each cycle is oriented only once, $E \cap E' = \emptyset$ and thus $e' \notin E$.

    We have that $c(e) \le \beta \cdot \min\{c(P'_e),c(P'_{e'})\}$, since otherwise $c(e') < c(e)$.
    Without loss of generality we may assume that $c(P'_e) \le c(P'_{e'})$ since otherwise the weight of $K''$ can only decrease.
    Then $c(e) \le \beta c(P'_e)$ and thus $c(f) \le \beta c(\hat{f}) + \beta^2 c(P'_e)$.
    
    Let $M = \bigcup_{g \in E} P'_g$ and $M' = \bigcup_{g \in E(K') \setminus E} P'_g$ and thus $c(M) \le c(M')$.
    By Claim~\ref{claim:path} we do not have to consider $|P_f| > 3$ and since we assumed that only edges in $M$ can have a factor $\beta^2$, 
    $c(K'') \le \beta \cdot c(M') + \beta^2 \cdot c(M)$.
    Since $\mathcal{P}$ is a partition of $E(H)$, by Theorem~\ref{thm:eul} 
    $c(K'') \le \beta c(H)/2 + \beta^2 c(H)/2 \le (3\beta/4 + 3\beta^2/4) \cdot \mathrm{opt}$,
    where $\mathrm{opt}$ is the weight of an optimal \TSP solution in $G$.
   
\section{Conclusion}\label{sec:conclusion}
We have seen that in a complete edge-weighted graph we can obtain a degree-$4$-bounded Eulerian
graph that has at most $1.5$ times the weight of an optimal TSP solution. This
result might allow for improvement. However, by setting $\beta=1$ an improvement would
directly imply an improvement over the approximation ratio of $1.5$ of the metric traveling
salesman problem and thus it would require valuable new insights.

The main result shows an approximability of the TSP with relaxed triangle
inequality that is better than all previous results as long as the value of $\beta$
is smaller than $13/3$. The most likely improvement of our result is for the term
$3\beta^2/4$, as it is not linear in $\beta$.
We conjecture the following.
\begin{conjecture}\label{con:alg}
    There is a polynomial time $(1.5 \beta)$-approximation algorithm for \TSP.
\end{conjecture}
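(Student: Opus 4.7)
The current analysis loses a factor $\beta^2$ precisely on the set $M$ of edges of the output tour $K''$ that are ``double shortcuts''---those produced when Rule~\ref{rule:join} removes a previously created shortcut arc along with another $K'$-edge that was itself obtained from an earlier application of Rule~\ref{rule:join}. The relaxed triangle inequality is then applied twice to bound such an edge against $H$-edges, contributing $\beta^2$ per unit weight. To reach the conjectured bound $1.5\beta$, the natural goal is to avoid double shortcuts entirely, so that every edge of $K''$ is backed by a path of length at most two in $H$ and the relaxed triangle inequality is invoked at most once per output edge. A simple calculation shows that if $c(M)=0$ could be guaranteed while keeping $c(H)\le 1.5\,\mathrm{opt}$, then $c(K'')\le \beta\,c(H)\le 1.5\beta\,\mathrm{opt}$, matching the conjecture.

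The first line of attack is to strengthen Theorem~\ref{thm:eul} to produce a degree-$4$-bounded spanning Eulerian subgraph $H$ of $2G$ of weight at most $1.5\,\mathrm{opt}$ that additionally admits a \emph{single-pass, connectivity-preserving} shortcut. At each degree-$4$ vertex $v$ there are three ways to pair the four incident arcs into two new arcs, and I would try to show that some globally consistent choice of pairings exists, computable in polynomial time, such that applying Rule~\ref{rule:join} simultaneously at every degree-$4$ vertex yields a single Hamiltonian cycle rather than a disjoint collection of cycles. If so, the cactus phase of \alg and its second round of Rule~\ref{rule:join} applications become unnecessary, and the resulting tour satisfies the conjectured bound directly.

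A more modest fallback is to retain the cactus construction but argue that the set $M$ can always be forced to be empty. Here I would exploit the freedom in choosing the Eulerian tour of $H$, which determines both $H'$ and the block structure of the cactus $K$, and try to show by an exchange argument---possibly combined with ear-decomposition techniques in the spirit of Seb\H{o}--Vygen~\cite{SV14} or Mucha~\cite{Muc12}---that one can always orient so that in the for loop of \alg the cheaper edge $e$ at every exit point is a genuine $H$-edge rather than a previously introduced shortcut arc. Since only such $e$ edges eventually feed into the final application of Rule~\ref{rule:join}, the set $M$ would vanish and the analysis would yield the claimed $1.5\beta$ ratio.

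The main obstacle is that connectivity-preserving shortcut problems are NP-hard in general, so any success must critically exploit the special structure of $H$ as a $1$-tree-plus-matching rather than treating it as a generic degree-$4$ Eulerian graph. A more fundamental barrier is that at $\beta=1$ the conjecture coincides with Christofides' $1.5$ bound, whose improvement is itself a major open question; any proof therefore must genuinely spend the slack provided by $\beta>1$ and cannot work uniformly at $\beta=1$. One concrete way to try to exploit that slack is to relax the degree bound in Theorem~\ref{thm:tree} to, say, $\mathbf{b}=\mathbf{4}$, paying a small additive cost in $c(H)$ that is affordable only when $\beta>1$, in exchange for enough structural freedom in the Eulerian subgraph to enforce single-pass shortcutability.
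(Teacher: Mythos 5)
This statement is stated in the paper as an open conjecture; the paper offers no proof of it, and your submission does not supply one either. What you have written is a research programme, not an argument: the diagnosis of where the factor $\beta^2$ enters (namely on edges of $K''$ whose class $P_f$ has size three, because the cheaper exit edge $e$ may itself be a shortcut arc covering two edges of $H$) is accurate, and your conditional calculation is sound --- if every edge in the set $E$ of chosen exit edges were a genuine edge of $H$, the analysis would indeed give $c(K'')\le\beta\,c(H)\le 1.5\beta\cdot\mathrm{opt}$. But every route you propose to realize this hinges on an existence claim you do not establish. For the ``single-pass'' route, you would need a transition system at the degree-$4$ vertices of $H$ whose simultaneous application of Rule~\ref{rule:join} yields one Hamiltonian cycle; you correctly note that such connectivity-preserving transition problems are NP-hard in general, and you give no argument that the $1$-tree-plus-$b$-matching structure of $H$ circumvents this. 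For the fallback route, the obstruction is concrete: at an exit point $v$ of a block $C$, \emph{both} edges of $C$ incident to $v$ can be entry sites (arcs with two heads created in the first shortcut phase), in which case no choice between $e$ and $e'$ avoids the double application of the relaxed triangle inequality; you would have to show that some choice of Eulerian tour, of $\hat{C}$, and of entry sites always prevents this configuration, and no such argument is given. The exchange argument and the ear-decomposition idea are named but not executed.

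Two smaller points. First, your phrase ``the set $M$ would vanish'' is imprecise: if every $e\in E$ is an $H$-edge then $M=\bigcup_{g\in E}P'_g=E$ is nonempty but each $P'_g$ is a singleton, so the $\beta^2$ coefficient on $c(M)$ improves to $\beta$; the conclusion you draw is still correct. Second, the ``fundamental barrier'' you cite at $\beta=1$ is not one: at $\beta=1$ the conjecture asserts only a $1.5$-approximation for metric TSP, which Christofides already provides, so a proof need not improve on anything known there --- it only needs to degrade to a Christofides-type guarantee as $\beta\to 1$. The genuine difficulty is entirely in the combinatorial claims above, and as they stand they remain unproven.
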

Such a result seems likely as the
$4\beta$-approximation algorithm does not include a factor $\beta^2$ and the
factor $\beta^2$ in our result is only required in specific cases.
Thus both proving and disproving the conjecture would lead to interesting insights.

\section*{Acknowledgment}
I would like to thank Andr\'{a}s Seb\H{o} for valuable pointers to the
literature.

\newcommand{\etalchar}[1]{$^{#1}$}


\begin{thebibliography}{BSvdSS12}

\bibitem[AB95]{AB95}
Thomas Andreae and Hans-J{\"u}rgen Bandelt.
\newblock Performance guarantees for approximation algorithms depending on
  parameterized triangle inequalities.
\newblock {\em SIAM Journal on Discrete Mathematics}, 8:1--16, 1995.

\bibitem[And01]{And01}
Thomas Andreae.
\newblock On the traveling salesman problem restricted to inputs satisfying a
  relaxed triangle inequality.
\newblock {\em Networks}, 38(2):59--67, 2001.

\bibitem[BC00]{BC00}
Michael Bender and Chandra Chekuri.
\newblock Performance guarantees for {TSP} with a parametrized triangle
  inequality.
\newblock {\em Information Processing Letters}, 73:17--21, 2000.

\bibitem[BCS94]{BCS94}
Hans-J{\"u}rgen Bandelt, Yves Crama, and Frits C.~R. Spieksma.
\newblock Approximation algorithms for multi-dimensional assignment problems
  with decomposable costs.
\newblock {\em Discrete Applied Mathematics}, 49(1-3):25--50, 1994.

\bibitem[BHK{\etalchar{+}}02]{BHKSU02}
Hans-Joachim B{\"o}ckenhauer, Juraj Hromkovi{\v{c}}, Ralf Klasing, Sebastian
  Seibert, and Walter Unger.
\newblock Towards the notion of stability of approximation for hard
  optimization tasks and the traveling salesman problem.
\newblock {\em Theoretical Computer Science}, 285(1):3--24, 2002.

\bibitem[BSvdSS12]{BSvdSS12}
Sylvia Boyd, Ren{\'e} Sitters, Suzanne van~der Ster, and Leen Stougie.
\newblock The traveling salesman problem on cubic and subcubic graphs.
\newblock {\em Mathematical Programming}, 144(1-2):227--245, Dec 2012.

\bibitem[Chr76]{chr76}
Nicos Christofides.
\newblock Worst-case analysis of a new heuristic for the travelling salesman
  problem.
\newblock Technical Report 388, Graduate School of Industrial Administration,
  Carnegie-Mellon University, 1976.

\bibitem[EJ73]{EJ73}
Jack Edmonds and Ellis~L. Johnson.
\newblock Matching, {E}uler tours and the {C}hinese postman.
\newblock {\em Mathematical Programming}, 5:88--124, 1973.

\bibitem[GLS05]{GLS05}
David Gamarnik, Moshe Lewenstein, and Maxim Sviridenko.
\newblock An improved upper bound for the {TSP} in cubic 3-edge-connected
  graphs.
\newblock {\em Operations Research Letters}, 33(5):467--474, 2005.

\bibitem[GSS11]{GSS11a}
Shayan~Oveis Gharan, Amin Saberi, and Mohit Singh.
\newblock A randomized rounding approach to the traveling salesman problem.
\newblock In Ostrovsky \cite{conf/focs/2011}, pages 550--559.

\bibitem[HK70]{HK70}
Michael Held and Richard~M. Karp.
\newblock The traveling-salesman problem and minimum spanning trees.
\newblock {\em Operations Research}, 18:1138--1162, 1970.

\bibitem[KLS12]{KLS12_degree}
Tam{\'a}s Kir{\'a}ly, Lap~Chi Lau, and Mohit Singh.
\newblock Degree bounded matroids and submodular flows.
\newblock {\em Combinatorica}, 32(6):703--720, 2012.

\bibitem[MS11]{MS11a}
Tobias M{\"o}mke and Ola Svensson.
\newblock Approximating graphic {TSP} by matchings.
\newblock In Ostrovsky \cite{conf/focs/2011}, pages 560--569.

\bibitem[Muc12]{Muc12}
Marcin Mucha.
\newblock $13/9$-approximation for graphic tsp.
\newblock {\em Theory of Computing Systems}, 55(4):640--657, Dec 2012.

\bibitem[Ost11]{conf/focs/2011}
Rafail Ostrovsky, editor.
\newblock {\em 52th Annual IEEE Symposium on Foundations of Computer Science,
  FOCS 2011, October 22-25, 2011, Palm Springs, California, USA}. IEEE, 2011.

\bibitem[Sch03]{Sch03}
Alexander Schrijver.
\newblock {\em Combinatorial Optimization}.
\newblock Springer, 2003.

\bibitem[SL07]{SL07}
Mohit Singh and Lap~Chi Lau.
\newblock Approximating minimum bounded degree spanning trees to within one of
  optimal.
\newblock In David~S. Johnson and Uriel Feige, editors, {\em Proc.~of the 39th
  Annual ACM Symposium on Theory of Computing (STOC~2007)}, pages 661--670. ACM
  Press, 2007.

\bibitem[SV14]{SV14}
Andr{\'a}s Seb{\H{o}} and Jens Vygen.
\newblock Shorter tours by nicer ears: 7/5-approximation for the graph-tsp, 3/2
  for the path version, and 4/3 for two-edge-connected subgraphs.
\newblock {\em Combinatorica}, Jul 2014.

\end{thebibliography}
\end{document}